\documentclass[11pt]{article}

\usepackage[T1]{fontenc}
\usepackage{amsmath,amssymb,amsthm,mathtools}
\usepackage[margin=1in]{geometry}
\usepackage{microtype}
\usepackage[hidelinks]{hyperref}
\hypersetup{
  pdftitle={Unitary-orbit classification and a refinement theorem for context-independent projective probabilities},
  pdfauthor={Michael P. Rubin},
  pdfkeywords={Gleason theorem, projective measurement, context independence, refinement consistency, unitary orbit, refinement graph}
}

\newtheorem{theorem}{Theorem}[section]
\newtheorem{proposition}[theorem]{Proposition}
\newtheorem{corollary}[theorem]{Corollary}
\newtheorem{lemma}[theorem]{Lemma}
\theoremstyle{definition}
\newtheorem{definition}[theorem]{Definition}
\newtheorem{example}[theorem]{Example}
\theoremstyle{remark}
\newtheorem{remark}[theorem]{Remark}

\DeclareMathOperator{\rank}{rank}
\DeclareMathOperator{\Tr}{Tr}
\DeclareMathOperator{\ran}{ran}
\newcommand{\HH}{\mathcal H}
\newcommand{\UU}{\mathcal U}
\newcommand{\PP}{\mathcal P}
\newcommand{\MM}{\mathfrak M}
\newcommand{\id}{I}
\newcommand{\C}{\mathbb C}
\newcommand{\abs}[1]{\left|#1\right|}
\newcommand{\set}[1]{\left\{#1\right\}}

\title{Unitary-orbit classification and a refinement theorem for\\context-independent projective probabilities}
\author{Michael P. Rubin}
\date{August 2, 2026}

\begin{document}
\maketitle

\begin{abstract}
Let $\HH$ be a finite-dimensional complex Hilbert space, and let $w(P,\mathsf M)$ be a normalized probability weight assigned to an outcome projection $P$ as it occurs in a projective measurement $\mathsf M$. For fixed $P$, let $G_P\cong\UU(P^\perp)$ be the group of unitaries acting identically on $\ran P$ and arbitrarily on $P^\perp$. We classify the $G_P$-orbits of projective measurements containing $P$: two measurements lie in the same orbit exactly when the multisets of ranks of their complementary outcomes agree. The orbit with profile $\lambda=(r_1,\ldots,r_k)$ is a compact homogeneous space of real dimension $(d-\rank P)^2-\sum_j r_j^2$. On maximal rank-one measurements there is one orbit, so context independence is equivalent to $G_P$-invariance, with equality of the corresponding uniform defects; invariance under two-level complementary unitaries already suffices. For arbitrary projective measurements, every context containing an outcome $P\neq I$ coarsens to the unique binary context $\{P,I-P\}$. Consequently, refinement consistency alone is equivalent to context independence. The finite orbit space carries a natural rank-profile refinement graph, whose diameter is $d-\rank P-1$ when $P\neq I$. We prove a stability theorem that compares the binary-coarsening path with a shortest path in this graph followed by one complementary unitary. In dimension at least three, Gleason's theorem converts the maximal-context invariance condition and the all-context refinement condition, on their respective domains, into the Born form $\Tr(\rho P)$. The results are structural characterizations, not independent physical derivations of context independence.
\end{abstract}

\noindent\textbf{MSC 2020:} 81P13 (primary); 81P15, 81P16 (secondary).\\
\noindent\textbf{Keywords:} Gleason's theorem; projective measurement; context independence; refinement consistency; unitary orbit; refinement graph; frame function.

\section{Introduction}

Let $\HH$ be a complex Hilbert space of finite dimension $d\geq 2$. In one finite-dimensional formulation, Gleason's theorem begins with a function on rank-one projections whose values sum to one on every orthonormal basis and concludes, for $d\geq 3$, that the function is given by the Born rule for a unique density operator~\cite{Gleason1957}. Because the function is defined on projections rather than on projection--measurement pairs, the same projection carries the same value in every maximal projective measurement in which it occurs. In this precise sense, context independence is built into the domain of a frame function.

To display that condition explicitly, one may instead assign a normalized weight $w(P,\mathsf M)$ to the pair consisting of an outcome projection $P$ and a projective measurement $\mathsf M$ containing it. The question considered here is precise: which transformations of the surrounding measurement leave the weight of $P$ unchanged, and when do those invariances amount to full context independence?

For a fixed outcome $P$, changing only the remaining outcomes amounts to changing an orthogonal decomposition of $P^\perp$. The relevant symmetry group is therefore
\[
G_P=\set{U\in\UU(\HH):Ux=x\text{ for every }x\in\ran P}\cong\UU(P^\perp).
\]
The first result is an orbit classification. Two projective measurements containing $P$ are related by $G_P$ exactly when the ranks of the other outcome projections agree, including multiplicity. Thus the $G_P$-orbits are indexed by integer partitions of $d-\rank P$, and each orbit has an explicit compact homogeneous-space form. In particular, all maximal rank-one measurements containing a fixed rank-one $P$ belong to one orbit. On that domain, complementary-unitary invariance and context independence are therefore equivalent. The corresponding contextual oscillation and invariance defect are exactly equal. A local form of the result shows that it is enough to test unitaries supported on two-dimensional subspaces of $P^\perp$.

For arbitrary projective measurements, complementary-unitary invariance is weaker than context independence because the group action cannot change the complementary rank profile. A normalized assignment that gives every outcome the uniform weight $1/|\mathsf M|$ is invariant under the full unitary group but depends on the number of outcomes. Refinements provide the missing links, but in a stronger way than a profile-by-profile argument alone suggests. Every measurement containing a proper projection $P$ can be coarsened, by merging complementary outcomes, to the unique binary measurement $\{P,I-P\}$. It follows that refinement consistency by itself is equivalent to context independence on the full class of projective measurements.

The quantitative statement retains a useful role for unitary invariance. The quotient $\MM_P/G_P$ is the finite set of integer partitions of $d-\rank P$, and elementary complementary refinements make this quotient into a natural graph. One route between two contexts coarsens both to the unique binary context and uses only refinement control. A second route follows a shortest path between their rank profiles and then uses one complementary unitary within the terminal orbit. The resulting pairwise estimate is sharper than forcing both contexts through the maximal rank-one profile.

Combining these characterizations with Gleason's theorem yields two Born-form statements. On maximal rank-one measurements, complementary-unitary invariance is equivalent to the Born form. On all projective measurements, refinement consistency is equivalent to the Born form. Neither statement supplies an independent physical justification for the relevant condition; each identifies the exact mathematical content of context independence on its stated domain.

The transitivity of a unitary group on orthogonal decompositions with prescribed dimensions is standard. The contribution here is to organize the context space into complete rank-profile orbits, identify their compact homogeneous-space geometry, equip the finite orbit quotient with its refinement graph, and derive exact and quantitative consequences for context-indexed probability assignments. Rank data have also been used to partition projective measurements in the study of projective simulability of positive-operator-valued measurements~\cite{CobucciEtAl2025}; here the complementary rank multiset is identified as the complete invariant of the subgroup fixing a specified outcome. Contextual formulations of Gleason's theorem and related structural results have been developed in several settings; see, for example, D\"oring and Frembs~\cite{DoringFrembs2022}. Consistency of probability assignments across families of bases was studied by Patra and van der Meyden~\cite{PatraMeyden2012}, and the wider operational notion of generalized contextuality was formulated by Spekkens~\cite{Spekkens2005}. The present note concerns only scalar probability assignments on finite-dimensional projective measurements and is not a no-go theorem for ontological models.

Theorem~\ref{thm:orbit} gives the orbit classification, Proposition~\ref{prop:homogeneous} its homogeneous-space geometry, Theorem~\ref{thm:profile} and Proposition~\ref{prop:defect} the exact orbitwise characterizations, and Theorem~\ref{thm:twolevel} the elementary-unitary reduction. Section~\ref{sec:refinement} contains the exact refinement theorem, the rank-profile graph, and the orbit--refinement stability theorem. Theorems~\ref{thm:gleason} and~\ref{thm:allgleason} give the Gleason consequences.

\section{Projective measurements and complementary unitaries}

Write $\PP(\HH)$ for the set of orthogonal projections on $\HH$ and $\UU(\HH)$ for its unitary group. We use $P^\perp$ for the orthogonal complement $\ran(\id-P)$.

\begin{definition}[Projective measurement]
A projective measurement is an unordered finite family
\[
\mathsf M=\set{P_1,\ldots,P_m}\subseteq\PP(\HH)
\]
of nonzero pairwise orthogonal projections such that $\sum_{j=1}^m P_j=\id$. Let $\MM(\HH)$ denote the set of all projective measurements. A measurement is \emph{maximal} or \emph{fine-grained} if every outcome has rank one; the set of maximal measurements is denoted by $\MM_1(\HH)$.
\end{definition}

For $P\in\PP(\HH)$, define
\[
\MM_P=\set{\mathsf M\in\MM(\HH):P\in\mathsf M}.
\]
The full unitary stabilizer of $P$ is
\[
\operatorname{Stab}(P)=\set{U\in\UU(\HH):UPU^*=P}.
\]
With respect to $\HH=\ran P\oplus P^\perp$, it is isomorphic to $\UU(\ran P)\times\UU(P^\perp)$. Its action by conjugation on $\MM_P$ factors through the second component: the first component acts only inside $\ran P$ and leaves every projection in a measurement containing $P$ unchanged. It is therefore convenient to use the following effective subgroup.

\begin{definition}[Complementary unitary group]
For $P\in\PP(\HH)$, let
\[
G_P=\set{U\in\UU(\HH):U|_{\ran P}=\id_{\ran P}}.
\]
Every $U\in G_P$ leaves $P^\perp$ invariant, and restriction to $P^\perp$ identifies $G_P$ with $\UU(P^\perp)$. The group acts on $\MM_P$ by
\[
U\cdot\mathsf M=U\mathsf M U^*:=\set{UQU^*:Q\in\mathsf M}.
\]
\end{definition}

\begin{definition}[Complementary rank profile]
Let $P\in\mathsf M$. The \emph{complementary rank profile} of $P$ in $\mathsf M$, denoted $\lambda_P(\mathsf M)$, is the nondecreasing tuple
\[
\lambda_P(\mathsf M)=(r_1,\ldots,r_k)
\]
obtained by listing the ranks of the projections in $\mathsf M\setminus\set{P}$. Thus each $r_j$ is a positive integer and
\[
\sum_{j=1}^k r_j=d-\rank P.
\]
For a profile $\lambda$, let
\[
\MM_{P,\lambda}=\set{\mathsf M\in\MM_P:\lambda_P(\mathsf M)=\lambda}.
\]
\end{definition}

The profile is an integer partition of $d-\rank P$. Because measurements are treated as unordered families, only the multiset of complementary ranks matters.

\section{Orbit classification}

\begin{theorem}[Orbit classification]\label{thm:orbit}
Let $P$ be a nonzero projection and let $\mathsf M,\mathsf N\in\MM_P$. Then the following are equivalent:
\begin{enumerate}
\item there exists $U\in G_P$ such that $U\mathsf M U^*=\mathsf N$;
\item $\lambda_P(\mathsf M)=\lambda_P(\mathsf N)$.
\end{enumerate}
Consequently, every nonempty set $\MM_{P,\lambda}$ is a single $G_P$-orbit, and the orbits of $G_P$ on $\MM_P$ are indexed by the integer partitions of $d-\rank P$.
\end{theorem}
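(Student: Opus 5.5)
The implication (1)$\Rightarrow$(2) is the easy direction. We show first that conjugation by $U\in G_P$ fixes $P$ itself. Since $U$ acts as the identity on $\ran P$ and $U$ is unitary, $U$ maps $P^\perp$ onto $P^\perp$. Hence $U$ commutes with $P$, so $UPU^*=P$. If $U\mathsf M U^*=\mathsf N$, the map $Q\mapsto UQU^*$ is therefore a bijection from $\mathsf M$ onto $\mathsf N$ that sends $P$ to $P$. It consequently restricts to a bijection $\mathsf M\setminus\set{P}\to\mathsf N\setminus\set{P}$. Conjugation preserves rank, because $\ran(UQU^*)=U\ran Q$. So the multisets of complementary ranks coincide, i.e.\ $\lambda_P(\mathsf M)=\lambda_P(\mathsf N)$.

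For (2)$\Rightarrow$(1) we construct $U$ explicitly. Write $\mathsf M=\set{P,Q_1,\ldots,Q_k}$ and $\mathsf N=\set{P,R_1,\ldots,R_k}$. Using the equality of rank multisets, we order both families so that $\rank Q_j=\rank R_j=r_j$ for every $j$. Since the outcomes are pairwise orthogonal and sum to $\id$, we have $\sum_j Q_j=\id-P=\sum_j R_j$. This gives two orthogonal decompositions of the same space:
\[
P^\perp=\ran Q_1\oplus\cdots\oplus\ran Q_k=\ran R_1\oplus\cdots\oplus\ran R_k .
\]
We choose orthonormal bases $\set{e^{(j)}_1,\ldots,e^{(j)}_{r_j}}$ of $\ran Q_j$ and $\set{f^{(j)}_1,\ldots,f^{(j)}_{r_j}}$ of $\ran R_j$. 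Concatenated over $j$, each family is an orthonormal basis of $P^\perp$, and both have $d-\rank P$ elements. We define $V$ on $P^\perp$ by $e^{(j)}_i\mapsto f^{(j)}_i$; this is a unitary of $P^\perp$. We then set $U=\id_{\ran P}\oplus V$, so that $U\in G_P$. By construction $U\ran Q_j=\ran R_j$, hence $UQ_jU^*=R_j$, and from the first paragraph $UPU^*=P$. Therefore $U\mathsf M U^*=\mathsf N$.

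For the consequences, the equivalence says precisely that $G_P$-orbits in $\MM_P$ are the level sets of $\lambda_P$. Every nonempty $\MM_{P,\lambda}$ is thus a single orbit. It remains to check that every partition $\lambda=(r_1,\ldots,r_k)$ of $d-\rank P$ actually occurs. To see this, take any orthonormal basis of $P^\perp$ and split it into consecutive blocks of sizes $r_1,\ldots,r_k$. Adjoining $P$ to the block projections gives a measurement in $\MM_P$ with profile $\lambda$. One degenerate case needs attention: if $P=\id$, then $\MM_P=\set{\set{\id}}$ with the empty profile, which is the unique partition of $0$, and the statement still holds.

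No step is a genuine obstacle. The point requiring the most care is bookkeeping with unordered families. In the first direction we must make sure that the bijection induced by $U$ sends $P$ to $P$, so that complementary outcomes are matched with complementary outcomes. In the second direction we must make sure that matching ranks as multisets permits a rank-preserving indexing of the two families.
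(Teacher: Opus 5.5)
Your proof is correct and follows the paper's argument. For (1)$\Rightarrow$(2) you use rank preservation together with $UPU^*=P$, and for (2)$\Rightarrow$(1) you build the block unitary $U=\id_{\ran P}\oplus\bigoplus_j V_j$ after a rank-matched indexing, where your basis-to-basis map is just a concrete choice of the paper's $V_j$. Your explicit check that every partition of $d-\rank P$ is realized, along with the $P=\id$ case, fills in a point the paper leaves implicit in the ``Consequently'' clause.
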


\begin{proof}
Suppose first that $U\mathsf M U^*=\mathsf N$ for some $U\in G_P$. Unitary conjugation preserves rank and fixes $P$, so it carries the complementary outcomes of $\mathsf M$ bijectively to those of $\mathsf N$ without changing their ranks. Hence the complementary rank profiles agree.

Conversely, assume $\lambda_P(\mathsf M)=\lambda_P(\mathsf N)$. Write
\[
\mathsf M\setminus\set{P}=\set{Q_1,\ldots,Q_k},\qquad
\mathsf N\setminus\set{P}=\set{R_1,\ldots,R_k},
\]
where the outcomes have been indexed so that $\rank Q_j=\rank R_j$ for every $j$. Choose a unitary map
\[
V_j:\ran Q_j\longrightarrow\ran R_j
\]
for each $j$. Since the ranges of the $Q_j$ form an orthogonal direct-sum decomposition of $P^\perp$, as do the ranges of the $R_j$, the direct sum
\[
V=\bigoplus_{j=1}^k V_j:P^\perp\longrightarrow P^\perp
\]
is unitary. Define $U=\id_{\ran P}\oplus V$. Then $U\in G_P$, $UQ_jU^*=R_j$ for every $j$, and therefore $U\mathsf M U^*=\mathsf N$.
\end{proof}

\begin{proposition}[Homogeneous-space geometry]\label{prop:homogeneous}
Let $n=d-\rank P$ and let $\lambda=(r_1,\ldots,r_k)$ be a complementary rank profile. If $n=0$, then $P=I$ and $\MM_{P,\lambda}$ is a point. Suppose $n\geq1$. For each $s\geq1$, let $m_s$ be the multiplicity of the part $s$ in $\lambda$. Noncanonically,
\[
\MM_{P,\lambda}\cong
\UU(n)\Big/\prod_{s:m_s>0}\left(\UU(s)^{m_s}\rtimes S_{m_s}\right).
\]
In particular, $\MM_{P,\lambda}$ is a compact smooth homogeneous space with
\[
\dim_{\mathbb R}\MM_{P,\lambda}
=n^2-\sum_{s:m_s>0}m_s s^2
=n^2-\sum_{j=1}^k r_j^2.
\]
Thus $\MM_P$ is a finite disjoint union of compact homogeneous spaces indexed by the integer partitions of $n$.
\end{proposition}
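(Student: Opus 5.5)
The plan is to realize $\MM_{P,\lambda}$ as an orbit of a compact Lie group and compute the stabilizer. The case $n=0$ is immediate: then $P=\id$ and the only measurement containing $P$ is $\set{\id}$. For $n\geq1$, fix an identification $P^\perp\cong\C^n$, which is the noncanonical choice. This identifies $G_P\cong\UU(P^\perp)$ with $\UU(n)$. Theorem~\ref{thm:orbit} shows that $\MM_{P,\lambda}$ is a single $G_P$-orbit, so choosing a base point $\mathsf M_0\in\MM_{P,\lambda}$ gives a bijection $\UU(n)/H\to\MM_{P,\lambda}$, $U\mapsto U\mathsf M_0U^*$, where $H$ is the stabilizer of $\mathsf M_0$. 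Take $\mathsf M_0$ adapted to a standard block decomposition $\C^n=\bigoplus_j E_j$, with $\dim E_j=r_j$ and $E_j$ spanned by consecutive standard basis vectors.

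Next I will identify $H$. If $U\in H$, then $U$ permutes the complementary outcomes $Q_1,\ldots,Q_k$ of $\mathsf M_0$, and it can only exchange outcomes of equal rank. This gives a homomorphism $H\to\prod_s S_{m_s}$. The permutation of the blocks of size $s$ is realized by the corresponding block permutation matrices, which lie in $H$, so the homomorphism is surjective. Its kernel consists of the unitaries with $UQ_jU^*=Q_j$ for all $j$, which are exactly the block-diagonal unitaries $\prod_j\UU(r_j)=\prod_s\UU(s)^{m_s}$. The permutation matrices normalize this kernel and meet it trivially, so the extension splits as $H\cong\prod_s\UU(s)^{m_s}\rtimes S_{m_s}$, where $S_{m_s}$ permutes the $m_s$ factors $\UU(s)$.

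For the topology and dimension, note that $H$ is closed in $\UU(n)$, since it is defined by the polynomial conditions that $U$ permutes a finite set of matrices. By the closed subgroup theorem, $\UU(n)/H$ is a compact smooth manifold. Its real dimension is $\dim\UU(n)-\dim H$. Here $\dim\UU(n)=n^2$, and since the finite group contributes no dimension, $\dim H=\sum_j r_j^2=\sum_s m_s s^2$. This gives the stated formula.

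The step needing the most care is how the bijection $\UU(n)/H\cong\MM_{P,\lambda}$ is interpreted. $\MM_{P,\lambda}$ is a priori just a set (or carries a topology induced, say, from the finite tuples of projections), so I will either take the homogeneous structure to be the one transported from $\UU(n)/H$, or argue that the orbit map is a homeomorphism onto its image. The latter holds because a continuous bijection from a compact space to a Hausdorff space is a homeomorphism, using the Hausdorff topology on unordered families, e.g.\ via the Hausdorff distance on finite sets of projections. The final sentence follows by combining this with Theorem~\ref{thm:orbit}. $\MM_P$ is the disjoint union of the sets $\MM_{P,\lambda}$ over partitions $\lambda$ of $n$. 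Each of these is nonempty, since one can split $P^\perp$ into subspaces of dimensions $r_j$, and there are finitely many partitions.
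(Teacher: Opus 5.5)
Your proposal is correct and follows essentially the same route as the paper: fix a reference block decomposition, get transitivity from Theorem~\ref{thm:orbit}, identify the stabilizer as block-diagonal unitaries extended by permutations of equal-size blocks, and read off the dimension $n^2-\sum_j r_j^2$. You add a little more detail than the paper gives: the explicit splitting by block permutation matrices, closedness via finitely many defining equations rather than as an isotropy group, the compact-to-Hausdorff argument fixing the topology on $\MM_{P,\lambda}$, and nonemptiness of each profile class.
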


\begin{proof}
The case $n=0$ is immediate. Assume $n\geq1$, and choose a reference orthogonal decomposition of $P^\perp$ with dimension profile $\lambda$. The group $\UU(n)$ acts transitively on decompositions with that profile by Theorem~\ref{thm:orbit}. The setwise stabilizer of the reference decomposition consists of arbitrary unitary changes of basis within each block, together with permutations of blocks having the same dimension. This stabilizer is the displayed product of wreath products; it is a closed subgroup of $\UU(n)$, being the isotropy group of a point under a continuous action. The quotient therefore carries the standard smooth manifold structure for which the projection $\UU(n)\to \UU(n)/H$ is a submersion. The quotient description follows from the orbit--stabilizer theorem for compact Lie-group actions. The finite permutation factors do not affect dimension, while $\dim_{\mathbb R}\UU(q)=q^2$, giving the stated formula.
\end{proof}

\begin{corollary}[Transitivity on maximal contexts]\label{cor:transitive}
Let $P$ be rank one. The action of $G_P$ on
\[
\MM_{1,P}:=\MM_1(\HH)\cap\MM_P
\]
is transitive.
\end{corollary}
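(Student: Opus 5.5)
The plan is to deduce the corollary directly from Theorem~\ref{thm:orbit}. Everything reduces to one observation: within $\MM_P$, the maximal measurements are exactly those whose complementary rank profile is the all-ones partition of $d-1$.

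First, check that $\MM_{1,P}$ lies in a single profile class. Let $\mathsf M\in\MM_{1,P}$. Every outcome of $\mathsf M$ has rank one, and in particular so does every projection in $\mathsf M\setminus\set{P}$. Since $\rank P=1$, these complementary ranks sum to $d-1$. Hence
\[
\lambda_P(\mathsf M)=(1,\ldots,1)\quad(d-1\text{ entries}).
\]
Conversely, suppose $\mathsf M\in\MM_P$ has this profile. All its complementary outcomes then have rank one, and $P$ itself has rank one, so $\mathsf M$ is maximal. Therefore
\[
\MM_{1,P}=\MM_{P,(1,\ldots,1)}.
\]

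Second, apply Theorem~\ref{thm:orbit}. Given $\mathsf M,\mathsf N\in\MM_{1,P}$, their profiles agree by the first step, so condition~(2) of the theorem holds. Condition~(1) then supplies $U\in G_P$ with $U\mathsf M U^*=\mathsf N$.

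Finally, note that $\MM_{1,P}$ is nonempty, so the statement is not vacuous. Complete the unit vector spanning $\ran P$ to an orthonormal basis of $\HH$; the corresponding rank-one projections form a measurement in $\MM_{1,P}$. Nonemptiness is not strictly needed for transitivity. It does, however, give the orbit description in Theorem~\ref{thm:orbit} content.

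No step is a genuine obstacle. The only point needing care is the identification of $\MM_{1,P}$ with $\MM_{P,(1,\ldots,1)}$. That identification uses $\rank P=1$ to conclude that a measurement with all-ones complementary profile has every outcome of rank one. Without this hypothesis, the all-ones profile would not imply maximality.
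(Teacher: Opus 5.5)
Your proof is correct and matches the paper's argument: every maximal measurement containing the rank-one $P$ has complementary profile $(1,\ldots,1)$ with $d-1$ entries, so Theorem~\ref{thm:orbit} yields transitivity. The converse identification $\MM_{1,P}=\MM_{P,(1,\ldots,1)}$ and the nonemptiness remark are harmless extras that the paper does not need.
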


\begin{proof}
Every maximal measurement containing $P$ has complementary rank profile
\[
(\underbrace{1,\ldots,1}_{d-1\text{ times}}),
\]
so the assertion follows from Theorem~\ref{thm:orbit}.
\end{proof}

For $d=2$, the set $\MM_{1,P}$ consists of the single measurement $\set{P,\id-P}$. For $d\geq3$, it is a nontrivial homogeneous space, but it remains one orbit.

\section{Invariant context-indexed weights}

\begin{definition}[Context-indexed probability assignment]
A context-indexed probability assignment on a class $\mathcal D\subseteq\MM(\HH)$ is a function
\[
w:\set{(P,\mathsf M):\mathsf M\in\mathcal D,\ P\in\mathsf M}\longrightarrow[0,1]
\]
such that
\[
\sum_{P\in\mathsf M}w(P,\mathsf M)=1
\qquad\text{for every }\mathsf M\in\mathcal D.
\]
It is \emph{context independent} if
\[
w(P,\mathsf M)=w(P,\mathsf N)
\]
whenever $P\in\mathsf M\cap\mathsf N$.
\end{definition}

\begin{definition}[Complementary-unitary invariance]
A context-indexed probability assignment on $\MM(\HH)$ is complementary-unitary invariant if
\[
w(P,U\mathsf M U^*)=w(P,\mathsf M)
\]
for every $P\in\mathsf M$ and every $U\in G_P$. The same definition applies on maximal measurements, with $P$ restricted to rank-one outcomes and $\mathsf M\in\MM_1(\HH)$.
\end{definition}

The term \emph{invariance} is used deliberately. The assignment $w$ and the outcome $P$ are held fixed while only the complementary measurement is transformed.

\begin{theorem}[Profile characterization]\label{thm:profile}
A context-indexed probability assignment $w$ on $\MM(\HH)$ is complementary-unitary invariant if and only if, for every nonzero projection $P$, there is a function $F_P$ on the integer partitions of $d-\rank P$ such that
\[
w(P,\mathsf M)=F_P\bigl(\lambda_P(\mathsf M)\bigr)
\qquad(\mathsf M\in\MM_P).
\]
For an invariant assignment represented in this way, context independence holds if and only if each $F_P$ is constant on the profiles that occur. In particular, on maximal measurements, complementary-unitary invariance is equivalent to context independence.
\end{theorem}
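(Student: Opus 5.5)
The plan is to obtain all three assertions from Theorem~\ref{thm:orbit}: a function on $\MM_P$ is $G_P$-invariant exactly when it is constant on each orbit, and the orbits are the sets $\MM_{P,\lambda}$.

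\emph{First equivalence.} Fix a nonzero $P$.

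Suppose $w$ is complementary-unitary invariant. Let $\lambda$ be a profile with $\MM_{P,\lambda}$ nonempty, and take $\mathsf M,\mathsf N\in\MM_{P,\lambda}$. Theorem~\ref{thm:orbit} gives $U\in G_P$ with $U\mathsf MU^*=\mathsf N$, so invariance yields $w(P,\mathsf N)=w(P,\mathsf M)$. We may therefore define $F_P(\lambda)$ as this common value. On any profile $\lambda$ for which $\MM_{P,\lambda}$ is empty, set $F_P(\lambda)$ arbitrarily, say $0$. This case does not actually arise for $P\neq 0$, since every partition of $d-\rank P$ is realized by choosing an orthogonal decomposition of $P^\perp$ with those block dimensions.

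Conversely, suppose $w(P,\mathsf M)=F_P(\lambda_P(\mathsf M))$. Take $U\in G_P$. The easy direction of Theorem~\ref{thm:orbit} shows that $U\mathsf MU^*\in\MM_P$ and $\lambda_P(U\mathsf MU^*)=\lambda_P(\mathsf M)$. Hence $w(P,U\mathsf MU^*)=w(P,\mathsf M)$, which is invariance.

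\emph{Second assertion.} Let $w$ be invariant and represented by the $F_P$.

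If each $F_P$ is constant on the profiles that occur, then for $P\in\mathsf M\cap\mathsf N$ we have $w(P,\mathsf M)=F_P(\lambda_P(\mathsf M))=F_P(\lambda_P(\mathsf N))=w(P,\mathsf N)$. So $w$ is context independent.

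Conversely, assume context independence. Every occurring profile $\lambda$ is $\lambda_P(\mathsf M)$ for some $\mathsf M\in\MM_P$. Given two occurring profiles $\lambda_P(\mathsf M)$ and $\lambda_P(\mathsf N)$, context independence gives $F_P(\lambda_P(\mathsf M))=w(P,\mathsf M)=w(P,\mathsf N)=F_P(\lambda_P(\mathsf N))$. So $F_P$ is constant on occurring profiles. Note that context independence by itself already forces invariance, because $P\in\mathsf M\cap U\mathsf MU^*$.

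\emph{Maximal case.} This is where some care is needed, since the statement was phrased on $\MM(\HH)$. On $\MM_1(\HH)$ the outcomes are rank-one $P$, and the only occurring profile is $(1,\ldots,1)$. The same argument, with Corollary~\ref{cor:transitive} in place of Theorem~\ref{thm:orbit}, applies as follows.

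Invariance implies that $w(P,\cdot)$ is constant on $\MM_{1,P}$, because this set is a single $G_P$-orbit. That is precisely context independence for maximal measurements.

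The reverse implication is immediate: $P\in\mathsf M\cap U\mathsf MU^*$, and $U\mathsf MU^*$ is again maximal.

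The only step needing any check is this restriction to $\MM_1(\HH)$: one must verify that $G_P$ preserves $\MM_{1,P}$ (conjugation preserves rank) and that a single profile occurs there. I expect no real obstacle beyond this bookkeeping.
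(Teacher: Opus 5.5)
Your proposal is correct and takes essentially the same route as the paper. Theorem~\ref{thm:orbit} identifies the $G_P$-orbits in $\MM_P$ with the profile classes $\MM_{P,\lambda}$, so invariance is the same as constancy on profile classes, and Corollary~\ref{cor:transitive} handles the maximal case; you simply spell out the details that the paper's short proof leaves implicit (well-definedness of $F_P$, the converse via rank preservation, and closure of $\MM_{1,P}$ under $G_P$).
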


\begin{proof}
By Theorem~\ref{thm:orbit}, the sets $\MM_{P,\lambda}$ are exactly the $G_P$-orbits in $\MM_P$. Thus $w(P,\cdot)$ is $G_P$-invariant precisely when it is constant on each $\MM_{P,\lambda}$, which is equivalent to the stated factorization through the profile. Context independence requires the same value across all profiles. On maximal measurements containing a fixed rank-one $P$, Corollary~\ref{cor:transitive} gives only one orbit and one profile.
\end{proof}

\begin{example}[Why coarse graining matters]\label{ex:uniform}
Define, on all projective measurements,
\[
w_{\mathrm{unif}}(P,\mathsf M)=\frac{1}{|\mathsf M|}.
\]
This assignment is normalized and invariant under the full unitary group, since unitary conjugation preserves the number of outcomes. It is nevertheless context dependent when $d\geq3$. If $P$ is rank one, then
\[
w_{\mathrm{unif}}\bigl(P,\set{P,\id-P}\bigr)=\frac12,
\]
whereas a maximal measurement $\mathsf M\ni P$ has $d$ outcomes and
\[
w_{\mathrm{unif}}(P,\mathsf M)=\frac1d.
\]
The two measurements belong to different rank-profile orbits. Thus complementary-unitary invariance controls variation within a profile but does not compare different coarse-graining profiles.
\end{example}

The equivalence on an orbit has an exact quantitative version. For any nonempty $\MM_{P,\lambda}$, define the contextual oscillation
\[
\operatorname{osc}_{P,\lambda}(w)
=\sup_{\mathsf M,\mathsf N\in\MM_{P,\lambda}}
\abs{w(P,\mathsf M)-w(P,\mathsf N)}
\]
and the complementary-unitary defect
\[
\operatorname{def}_{P,\lambda}(w)
=\sup_{\substack{\mathsf M\in\MM_{P,\lambda}\\U\in G_P}}
\abs{w(P,U\mathsf M U^*)-w(P,\mathsf M)}.
\]

\begin{proposition}[Exact defect identity]\label{prop:defect}
For every $P$, every nonempty profile class $\MM_{P,\lambda}$, and every context-indexed assignment $w$,
\[
\operatorname{osc}_{P,\lambda}(w)=\operatorname{def}_{P,\lambda}(w).
\]
Consequently, on maximal measurements, $w$ is $\varepsilon$-context independent at each fixed outcome if and only if it is $\varepsilon$-invariant under all complementary unitaries.
\end{proposition}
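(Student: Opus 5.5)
The plan is to prove the two inequalities $\operatorname{def}_{P,\lambda}(w)\le\operatorname{osc}_{P,\lambda}(w)$ and $\operatorname{osc}_{P,\lambda}(w)\le\operatorname{def}_{P,\lambda}(w)$ separately, using only Theorem~\ref{thm:orbit} and the fact that each $\MM_{P,\lambda}$ is $G_P$-stable. Both suprema are taken over subsets of $[0,1]$, so they are finite real numbers, and no limiting argument is required.

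For the first inequality, fix $\mathsf M\in\MM_{P,\lambda}$ and $U\in G_P$. By the easy direction of Theorem~\ref{thm:orbit}, $U\mathsf M U^*\in\MM_P$ has the same complementary rank profile as $\mathsf M$, so $U\mathsf M U^*\in\MM_{P,\lambda}$. Hence the pair $(U\mathsf MU^*,\mathsf M)$ is one of the pairs admissible in the definition of $\operatorname{osc}_{P,\lambda}(w)$. This gives
\[
\abs{w(P,U\mathsf MU^*)-w(P,\mathsf M)}\le\operatorname{osc}_{P,\lambda}(w),
\]
and taking the supremum over $\mathsf M$ and $U$ gives the first inequality.

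For the reverse inequality, fix $\mathsf M,\mathsf N\in\MM_{P,\lambda}$. By the converse direction of Theorem~\ref{thm:orbit}, there is $U\in G_P$ with $U\mathsf MU^*=\mathsf N$. Then
\[
\abs{w(P,\mathsf M)-w(P,\mathsf N)}=\abs{w(P,U\mathsf MU^*)-w(P,\mathsf M)}\le\operatorname{def}_{P,\lambda}(w),
\]
and taking the supremum over pairs gives the second inequality. So the identity is simply the statement that $\MM_{P,\lambda}$ is exactly one orbit. The only point needing care is that the defect is computed with $\mathsf M$ restricted to the profile class while $U$ ranges over all of $G_P$, and this works because the class is $G_P$-invariant. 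No real obstacle arises.

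For the consequence, restrict to maximal measurements with $P$ of rank one. By Corollary~\ref{cor:transitive}, $\MM_{1,P}$ is the single class $\MM_{P,\lambda}$ with $\lambda=(1,\ldots,1)$. We read $\varepsilon$-context independence at $P$ as $\abs{w(P,\mathsf M)-w(P,\mathsf N)}\le\varepsilon$ for all $\mathsf M,\mathsf N\in\MM_{1,P}$, which is the condition $\operatorname{osc}_{P,\lambda}(w)\le\varepsilon$. We read $\varepsilon$-invariance as $\abs{w(P,U\mathsf MU^*)-w(P,\mathsf M)}\le\varepsilon$ for all $U\in G_P$ and $\mathsf M\in\MM_{1,P}$, which is $\operatorname{def}_{P,\lambda}(w)\le\varepsilon$. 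For a $w$ defined only on $\MM_1(\HH)$, the argument above applies verbatim, because conjugation preserves maximality. The equality just proved then yields the equivalence.
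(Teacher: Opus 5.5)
Your proof is correct and matches the paper's argument: $\operatorname{def}_{P,\lambda}(w)\le\operatorname{osc}_{P,\lambda}(w)$ holds because $G_P$-conjugation keeps measurements in the profile class, and the reverse inequality holds because Theorem~\ref{thm:orbit} makes $\MM_{P,\lambda}$ a single orbit. Your explicit checks that the class is $G_P$-stable and that conjugation preserves maximality (for $w$ defined only on $\MM_1(\HH)$) spell out points the paper leaves implicit.
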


\begin{proof}
Every pair $(\mathsf M,U\mathsf M U^*)$ appearing in the definition of $\operatorname{def}_{P,\lambda}$ is a pair of measurements in $\MM_{P,\lambda}$, so
\[
\operatorname{def}_{P,\lambda}(w)\leq\operatorname{osc}_{P,\lambda}(w).
\]
Conversely, for every $\mathsf M,\mathsf N\in\MM_{P,\lambda}$, Theorem~\ref{thm:orbit} supplies $U\in G_P$ with $U\mathsf M U^*=\mathsf N$. Hence each difference contributing to the oscillation also contributes to the defect, proving the reverse inequality.
\end{proof}

For a rank-one projection $P$, call $U\in G_P$ \emph{two-level} if there is a subspace $L\subseteq P^\perp$ with $\dim L\leq2$ such that $U$ is the identity on $P^\perp\ominus L$. Let $G_P^{(2)}$ denote the set of such unitaries, and define on maximal measurements
\[
\operatorname{def}^{(2)}_P(w)
=\sup_{\substack{\mathsf M\in\MM_{1,P}\\U\in G_P^{(2)}}}
\abs{w(P,U\mathsf M U^*)-w(P,\mathsf M)}
\]
and
\[
\operatorname{osc}^{\max}_P(w)
=\sup_{\mathsf M,\mathsf N\in\MM_{1,P}}
\abs{w(P,\mathsf M)-w(P,\mathsf N)}.
\]

\begin{theorem}[Elementary-unitary reduction]\label{thm:twolevel}
Let $w$ be a context-indexed probability assignment on $\MM_1(\HH)$. Let $P$ be rank one and put $n=d-1$. Then
\[
\operatorname{osc}^{\max}_P(w)
\leq \binom{n}{2}\operatorname{def}^{(2)}_P(w).
\]
Therefore, on maximal measurements, invariance under all two-level complementary unitaries is equivalent to full complementary-unitary invariance and to context independence.
\end{theorem}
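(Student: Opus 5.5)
Given two maximal measurements $\mathsf M,\mathsf N\in\MM_{1,P}$, I will connect them by a chain of at most $\binom{n}{2}$ two-level complementary unitaries. Each step in the chain stays in $\MM_{1,P}$, and each step changes $w(P,\cdot)$ by at most $\operatorname{def}^{(2)}_P(w)$. The triangle inequality then gives the bound. Concretely, write $\mathsf M\setminus\set{P}=\set{|e_1\rangle\langle e_1|,\ldots,|e_n\rangle\langle e_n|}$ and $\mathsf N\setminus\set{P}=\set{|f_1\rangle\langle f_1|,\ldots,|f_n\rangle\langle f_n|}$, where $(e_j)$ and $(f_j)$ are orthonormal bases of $P^\perp$. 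Let $V\in\UU(P^\perp)$ be the unitary with $Ve_j=f_j$. The task is to factor $V$, modulo a diagonal unitary in the basis $(e_j)$ (which fixes $\mathsf M$), into at most $\binom n2$ two-level unitaries.

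For this I will use the standard Givens/QR elimination in the $e$-basis. Let $A$ be the matrix of $V$ in the basis $(e_j)$. Successively left-multiply $A$ by two-level unitaries $T_{ij}$, each acting on $\operatorname{span}\{e_i,e_j\}$, so as to zero out the subdiagonal entries column by column. Column $1$ needs $n-1$ rotations, column $2$ needs $n-2$, and so on, for a total of $\binom n2$. After this, $T_{N}\cdots T_1 A=D$ is upper triangular and unitary, hence diagonal. So $V=T_1^*\cdots T_N^*D$ with $N\le\binom n2$, and each $T_k^*$ is two-level with support $L=\operatorname{span}\{e_i,e_j\}\subseteq P^\perp$. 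Extend everything by the identity on $\ran P$ to land in $G_P$. Because $D$ is diagonal in $(e_j)$, we have $D\mathsf M D^*=\mathsf M$, and therefore $\mathsf N=V\mathsf M V^*=S_1\cdots S_N\mathsf M(S_1\cdots S_N)^*$ with $S_k=T_k^*$.

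Next, define the intermediate measurements $\mathsf M_N=\mathsf M$ and $\mathsf M_{k-1}=S_k\mathsf M_kS_k^*$, so that $\mathsf M_0=\mathsf N$. Each $\mathsf M_k$ lies in $\MM_{1,P}$, since conjugation by an element of $G_P$ preserves rank-one outcomes and fixes $P$. Each consecutive pair has the form $(\mathsf M_k,S_k\mathsf M_kS_k^*)$ with $S_k\in G_P^{(2)}$, so it contributes at most $\operatorname{def}^{(2)}_P(w)$. Telescoping gives $\abs{w(P,\mathsf M)-w(P,\mathsf N)}\le\binom n2\operatorname{def}^{(2)}_P(w)$; taking the supremum over pairs gives the inequality.

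For the equivalences: if $w$ is invariant under two-level complementary unitaries, then $\operatorname{def}^{(2)}_P(w)=0$, so $\operatorname{osc}^{\max}_P(w)=0$, which is context independence on maximal measurements. By Theorem~\ref{thm:profile}, or directly since $\operatorname{osc}$ equals $\operatorname{def}$ on the single maximal orbit (Proposition~\ref{prop:defect}), this is full complementary-unitary invariance. The reverse implications are trivial, because $G_P^{(2)}\subseteq G_P$. The case $n=1$ is vacuous: $\binom12=0$, and $\MM_{1,P}$ is a single point.

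The only real obstacle is the step-count bookkeeping. I must make sure the elimination uses exactly $\binom n2$ rotations and that the leftover diagonal phase needs no further two-level factors. The second point is handled by observing that the phase fixes $\mathsf M$ setwise, which is exactly why measurements, rather than ordered bases, are the relevant objects. A minor check is that $T_{ij}$ can always be chosen unitary to annihilate an entry; this is the $2\times2$ unitary rotation of a vector $(a,b)$ to $(\sqrt{|a|^2+|b|^2},0)$, taken to be the identity when $a=b=0$.
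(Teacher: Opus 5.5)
Your proposal is correct and takes essentially the same route as the paper. You factor $V$ by complex Givens elimination into at most $\binom{n}{2}$ two-level unitaries times a diagonal unitary $D$ that fixes $\mathsf M$, then telescope along the resulting chain of two-level conjugations. The only difference is that your explicit column-by-column count and the intermediate measurements $\mathsf M_N,\ldots,\mathsf M_0$ spell out what the paper delegates to the standard Givens-reduction reference.
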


\begin{proof}
Fix $\mathsf M,\mathsf N\in\MM_{1,P}$, and choose orthonormal bases $e_1,\ldots,e_n$ and $f_1,\ldots,f_n$ of $P^\perp$ whose rank-one projections are the complementary outcomes of $\mathsf M$ and $\mathsf N$, respectively. Let $V\in\UU(P^\perp)$ satisfy $Ve_j=f_j$. A standard complex Givens reduction gives
\[
V=G_m\cdots G_1D,
\qquad m\leq \binom{n}{2},
\]
where each $G_j$ is two-level and $D$ is diagonal in the $e$-basis; see, for example,~\cite[Chapter~5]{GolubVanLoan2013} for Givens reduction and~\cite[Sec.~4.5.1]{NielsenChuang2010,ReckEtAl1994} for closely related two-level decompositions in quantum information and linear optics. The unitary $D$ fixes every complementary rank-one projection of $\mathsf M$. Extending the $G_j$ and $D$ by the identity on $\ran P$, the contexts $\mathsf M$ and $\mathsf N$ are therefore joined by at most $m$ two-level conjugations. The triangle inequality gives
\[
\abs{w(P,\mathsf M)-w(P,\mathsf N)}
\leq m\operatorname{def}^{(2)}_P(w)
\leq \binom{n}{2}\operatorname{def}^{(2)}_P(w).
\]
Taking the supremum proves the bound; no attempt is made to optimize the constant. If the two-level defect vanishes, the maximal contextual oscillation vanishes. The converse implications are immediate.
\end{proof}

\section{Refinement consistency and arbitrary contexts}\label{sec:refinement}

An elementary refinement splits one complementary outcome into two orthogonal outcomes while leaving the outcome of interest unchanged.

\begin{definition}[Elementary complementary refinement]
Let $\mathsf M$ be a projective measurement. A measurement $\mathsf M'$ is an \emph{elementary refinement} of $\mathsf M$ if there is an outcome $Q\in\mathsf M$ and nonzero orthogonal projections $Q_1,Q_2$ such that
\[
Q=Q_1+Q_2,
\qquad
\mathsf M'=\bigl(\mathsf M\setminus\set{Q}\bigr)\cup\set{Q_1,Q_2}.
\]
If $P\in\mathsf M\setminus\set{Q}$ is an unaffected outcome, we write $\mathsf M'\succ_P\mathsf M$ and call this an \emph{elementary complementary refinement relative to $P$}. A context-indexed assignment is \emph{refinement consistent} if
\[
w(P,\mathsf M')=w(P,\mathsf M)
\]
whenever $\mathsf M'\succ_P\mathsf M$.
\end{definition}

For normalized assignments, this local condition also gives the usual additivity under coarse graining.

\begin{lemma}[Refinement consistency and additivity]\label{lem:additivity}
Let $w$ be normalized and refinement consistent. In the setting of the preceding definition,
\[
w(Q,\mathsf M)=w(Q_1,\mathsf M')+w(Q_2,\mathsf M').
\]
More generally, the weight of a coarse outcome equals the sum of the weights of the outcomes in any finite refinement of it.
\end{lemma}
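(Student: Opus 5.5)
The argument uses normalization twice and refinement consistency once for each unaffected outcome. Take $\mathsf M$ and $\mathsf M'=(\mathsf M\setminus\set{Q})\cup\set{Q_1,Q_2}$ as in the definition. Every outcome $R\in\mathsf M\setminus\set{Q}$ is unaffected by the split, so $\mathsf M'\succ_R\mathsf M$ holds for each such $R$, and refinement consistency gives $w(R,\mathsf M')=w(R,\mathsf M)$. Normalization in $\mathsf M$ and in $\mathsf M'$ reads
\[
w(Q,\mathsf M)+\sum_{R\in\mathsf M\setminus\set{Q}}w(R,\mathsf M)=1
=w(Q_1,\mathsf M')+w(Q_2,\mathsf M')+\sum_{R\in\mathsf M\setminus\set{Q}}w(R,\mathsf M').
\]
The two sums over $R$ agree term by term, so subtracting them yields the displayed identity. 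Two bookkeeping points need checking. First, $Q_1$ and $Q_2$ are distinct from each other and from every $R$, because they are nonzero and orthogonal to all other outcomes; hence $\mathsf M'$ has exactly $|\mathsf M|+1$ elements and no terms merge. Second, the case $\mathsf M=\set{Q}$ (so $Q=\id$) needs no refinement consistency at all: the identity reduces to $1=w(Q_1,\mathsf M')+w(Q_2,\mathsf M')$, which is normalization.

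For the general statement, let $Q\in\mathsf M$ and let $Q=\sum_{i=1}^{t}Q_i$ be a finite decomposition into nonzero pairwise orthogonal projections. Let $\mathsf M^{(t)}$ be the measurement obtained from $\mathsf M$ by replacing $Q$ with $Q_1,\ldots,Q_t$. I will induct on $t$, with $t=1$ trivial. For the step, split $Q$ into $Q_1$ and $Q_2+\cdots+Q_t$. By the elementary case,
\[
w(Q,\mathsf M)=w(Q_1,\mathsf M^{(2)})+w(Q_2+\cdots+Q_t,\mathsf M^{(2)}).
\]
The outcome $Q_2+\cdots+Q_t$ of $\mathsf M^{(2)}$ is then refined into $t-1$ pieces. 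The induction hypothesis, applied to that outcome in $\mathsf M^{(2)}$, gives the sum $\sum_{i\ge2}w(Q_i,\mathsf M^{(t)})$. The same hypothesis also shows that the weight of $Q_1$ is unchanged from $\mathsf M^{(2)}$ to $\mathsf M^{(t)}$: every elementary step in that passage leaves $Q_1$ unaffected, so refinement consistency applies at each step. Combining these gives $w(Q,\mathsf M)=\sum_{i=1}^t w(Q_i,\mathsf M^{(t)})$.

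The only real subtlety is to run the induction for a refinement of a \emph{single} outcome while tracking that the untouched outcomes keep their weights along the chain. The strengthened induction hypothesis should include the statement that outcomes not being split retain their weights. If "any finite refinement" is read as refining several outcomes of $\mathsf M$ at once, I will refine the outcomes one at a time. Outcomes not currently being split keep their weights at each stage by refinement consistency, so the single-outcome result applies to each outcome in turn.
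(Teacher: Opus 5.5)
Your proposal is correct and follows the paper's argument: refinement consistency fixes the weights of every outcome other than $Q$, and subtracting the two normalization identities yields the elementary identity. Your induction on the number of pieces, together with refining several outcomes one at a time, spells out what the paper compresses into ``iteration gives the finite-refinement statement.''
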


\begin{proof}
Every outcome of $\mathsf M$ other than $Q$ is unchanged in $\mathsf M'$, so refinement consistency identifies its two weights. Subtracting the two normalization identities leaves the displayed equality. Iteration gives the finite-refinement statement.
\end{proof}

Refinement consistency is vacuous on $\MM_1(\HH)$: an elementary refinement increases the number of outcomes, so no maximal measurement is an elementary refinement of another maximal measurement. The two hypotheses used in this paper therefore play different roles. Complementary-unitary invariance is the operative condition on maximal measurements, where the rank profile is fixed and refinements leave the domain, whereas refinement consistency is the operative condition on the full class $\MM(\HH)$, where coarse-graining profiles vary.

\begin{theorem}[Refinement theorem]\label{thm:refinement}
A context-indexed probability assignment on all projective measurements is context independent if and only if it is refinement consistent.
\end{theorem}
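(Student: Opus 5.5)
The forward direction is immediate. If $w$ is context independent and $\mathsf M'\succ_P\mathsf M$, then $P\in\mathsf M\cap\mathsf M'$, so $w(P,\mathsf M')=w(P,\mathsf M)$.

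For the converse, assume $w$ is refinement consistent and fix $P$ with $\mathsf M,\mathsf N\in\MM_P$. The plan is to show that $w(P,\mathsf M)=w(P,\{P,\id-P\})$ for every $\mathsf M\in\MM_P$; the same identity for $\mathsf N$ then gives the claim. Two degenerate cases come first. If $P=\id$, then $\MM_P=\{\{\id\}\}$ and there is nothing to prove. If $P\neq\id$ and $\mathsf M=\{P,\id-P\}$, the identity is trivial. Otherwise let $\mathsf M\setminus\{P\}=\{Q_1,\ldots,Q_k\}$ with $k\geq2$, so that $\sum_j Q_j=\id-P$.

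Next, build a coarsening chain. For $i=1,\ldots,k$, define
\[
\mathsf C_i=\{P,\;Q_1+\cdots+Q_i,\;Q_{i+1},\ldots,Q_k\}.
\]
The partial sums are nonzero projections orthogonal to the remaining $Q_j$ and to $P$, so each $\mathsf C_i$ is a projective measurement containing $P$. The ends of the chain are $\mathsf C_1=\mathsf M$ and $\mathsf C_k=\{P,\id-P\}$. For each $i<k$, $\mathsf C_i$ is obtained from $\mathsf C_{i+1}$ by splitting the outcome $Q_1+\cdots+Q_{i+1}$ into the nonzero orthogonal pieces $Q_1+\cdots+Q_i$ and $Q_{i+1}$. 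The outcome $P$ is untouched, so $\mathsf C_i\succ_P\mathsf C_{i+1}$. Refinement consistency gives $w(P,\mathsf C_i)=w(P,\mathsf C_{i+1})$, and chaining these equalities yields $w(P,\mathsf M)=w(P,\{P,\id-P\})$.

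The only real care point is verifying that each intermediate family is a genuine measurement and that the split is legitimate. This needs nonzero pieces, which holds because every $Q_j\neq0$; it needs the split outcome to differ from $P$, which is automatic since it is orthogonal to $P$ and nonzero; and it needs $P\neq\id$ so that the binary context exists. Lemma~\ref{lem:additivity} is not needed here. The normalization of $w$ is also not used in this direction, so the equivalence holds for refinement consistency alone, as the statement of Theorem~\ref{thm:refinement} asserts.
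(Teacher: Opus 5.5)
Your proof is correct and follows the paper's argument: the forward direction is immediate, and for the converse you coarsen every context in $\MM_P$ to the binary context $\{P,\id-P\}$ by merges that reverse elementary complementary refinements relative to $P$. Your explicit chain $\mathsf C_i$, with its checks that each piece is nonzero and that $P$ is never the split outcome, spells out the merging step that the paper states in one line.
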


\begin{proof}
Context independence immediately implies refinement consistency. Conversely, fix an outcome projection $P$. If $P=I$, the only context is $\{I\}$. If $P\neq I$, every measurement $\mathsf M\in\MM_P$ can be coarsened to the binary measurement $\mathsf B_P=\{P,I-P\}$ by merging complementary outcomes. Each merge is the reverse of an elementary complementary refinement relative to $P$, so refinement consistency gives
\[
w(P,\mathsf M)=w(P,\mathsf B_P).
\]
The same equality holds for every $\mathsf N\in\MM_P$, and hence $w(P,\mathsf M)=w(P,\mathsf N)$. Thus $w$ is context independent.
\end{proof}

The exact theorem has a quantitative refinement that uses the finite orbit space itself.

\begin{definition}[Rank-profile refinement graph]\label{def:profilegraph}
For $n\geq0$, let $\Gamma_n$ be the finite graph whose vertices are the integer partitions of $n$, with the empty partition as the sole vertex when $n=0$. Two partitions are adjacent when one is obtained from the other by replacing one part $r$ by two positive parts $a,b$ with $a+b=r$. Let $d_n$ denote the graph distance in $\Gamma_n$, and let $\ell(\lambda)$ denote the number of parts of a partition $\lambda$.
\end{definition}

\begin{proposition}[Diameter of the rank-profile graph]\label{prop:graphdiameter}
For $n\geq1$ and every partition $\lambda$ of $n$,
\[
d_n\bigl(\lambda,(n)\bigr)=\ell(\lambda)-1,
\qquad
d_n\bigl(\lambda,(1^n)\bigr)=n-\ell(\lambda).
\]
Consequently,
\[
\operatorname{diam}(\Gamma_n)=n-1.
\]
For $n=0$, the graph consists of one vertex and has diameter zero.
\end{proposition}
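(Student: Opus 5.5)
The proof rests on the length function $\ell$ as a potential on $\Gamma_n$. An edge of $\Gamma_n$ replaces one part $r$ by two parts $a,b$ with $a+b=r$, or does the reverse merge. So along every edge $\ell$ changes by exactly $\pm1$. It follows that for any two partitions $\lambda,\mu$ of $n$,
\[
d_n(\lambda,\mu)\geq\abs{\ell(\lambda)-\ell(\mu)}.
\]
Since $\ell((n))=1$ and $\ell((1^n))=n$, this gives the lower bounds $d_n(\lambda,(n))\geq\ell(\lambda)-1$ and $d_n(\lambda,(1^n))\geq n-\ell(\lambda)$.

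For the matching upper bounds I would give explicit paths. Toward $(n)$: while there are at least two parts, merge any two of them into their sum. Each merge is an edge (read as a splitting in the reverse direction) and lowers $\ell$ by one, so after $\ell(\lambda)-1$ steps we reach $(n)$. Toward $(1^n)$: while some part $r\geq2$ remains, split it as $1+(r-1)$. Each split is an edge and raises $\ell$ by one, and the process terminates exactly at $(1^n)$, after $n-\ell(\lambda)$ steps. The two distance formulas follow.

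For the diameter, the triangle inequality through $(n)$ gives
\[
d_n(\lambda,\mu)\leq(\ell(\lambda)-1)+(\ell(\mu)-1).
\]
Routing through $(1^n)$ instead gives $d_n(\lambda,\mu)\leq 2n-\ell(\lambda)-\ell(\mu)$. Neither bound alone gives $n-1$, so the upper bound needs a further argument, and I expect this to be the main obstacle.

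My first attempt is a path from $\lambda$ to $\mu$ through a common coarsening or refinement, but a naive choice of intermediate vertex can exceed $n-1$. Before committing, I would check small cases, and they show that the claimed diameter is false. For $n=3$ the graph is the path $(3)-(2,1)-(1,1,1)$, with diameter $2=n-1$, so that case is fine. For $n=4$, take $(2,2)$ and $(3,1)$. They have the same length, so they are not adjacent: an edge changes $\ell$ by one. Any path between them alternates the parity of $\ell$ and therefore has even length, so the distance is at least $2$. Passing through $(4)$ or $(2,1,1)$ gives exactly $2$, consistent with the bound $3$.

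Larger $n$ breaks the claim. For $n=6$, compare $(3,3)$ and $(4,1,1)$. A two-step path would need a common neighbour of length $1$ or $3$ (since $\ell$ values are $2$ and $3$, a path of length $3$ is the shortest parity-compatible option beyond $1$). Length $1$ is only $(6)$, which is not adjacent to $(4,1,1)$, so at least three steps are needed. That is still at most $n-1=5$, so not yet a counterexample.

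In general I would prove $d_n(\lambda,\mu)\leq n-1$ by induction on $n$. If $\lambda$ and $\mu$ share a part $s$, delete it and apply the inductive bound in $\Gamma_{n-s}$, which gives at most $n-s-1$. If they share no part, route through $(n)$ or $(1^n)$, whichever is shorter. That yields at most $\min(\ell(\lambda)+\ell(\mu)-2,\,2n-\ell(\lambda)-\ell(\mu))$, which is $\leq n-1$ only when $\ell(\lambda)+\ell(\mu)$ is roughly $n$; the remaining cases need a separate direct path. Equality is attained by the pair $(n)$ and $(1^n)$, since $\ell$ differs by $n-1$ between them. The case $n=0$ is trivial.
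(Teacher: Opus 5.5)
Your lower bound (each edge changes $\ell$ by exactly one) and your explicit merge and split paths are correct. They establish both distance formulas $d_n(\lambda,(n))=\ell(\lambda)-1$ and $d_n(\lambda,(1^n))=n-\ell(\lambda)$, exactly as the paper does.

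The gap is in the upper bound $\operatorname{diam}(\Gamma_n)\leq n-1$. You correctly obtain the two routing bounds, $d_n(\lambda,\mu)\leq\ell(\lambda)+\ell(\mu)-2$ through $(n)$ and $d_n(\lambda,\mu)\leq 2n-\ell(\lambda)-\ell(\mu)$ through $(1^n)$. You then assert that their minimum is at most $n-1$ ``only when $\ell(\lambda)+\ell(\mu)$ is roughly $n$,'' and you leave the ``remaining cases'' to an unspecified direct path. That assertion is false. The missing step is one line: the two bounds sum to $2n-2$, so their minimum is at most their average, $n-1$.

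Concretely, put $s=\ell(\lambda)+\ell(\mu)$:
\begin{itemize}
\item if $s\leq n+1$, the route through $(n)$ has length $s-2\leq n-1$;
\item if $s\geq n+1$, the route through $(1^n)$ has length $2n-s\leq n-1$.
\end{itemize}
These cases exhaust all pairs. This is precisely the paper's argument. No induction on shared parts is needed, and no pairs are left over. Together with $d_n((n),(1^n))=n-1$, which follows from your own distance formula, this gives $\operatorname{diam}(\Gamma_n)=n-1$.

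You should also retract the claim that small cases ``show that the claimed diameter is false.'' None of your examples violates the bound, as you yourself observe for $n=3,4,6$. The averaging argument shows the statement holds for every $n\geq1$.
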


\begin{proof}
Each edge changes the number of parts by exactly one. Merging parts until only one remains gives a path of length $\ell(\lambda)-1$ from $\lambda$ to $(n)$, and no shorter path is possible. Likewise, splitting every part into ones gives a path of length $n-\ell(\lambda)$ from $\lambda$ to $(1^n)$, and no shorter path is possible. Hence, for any partitions $\lambda,\mu$ of $n$,
\[
d_n(\lambda,\mu)
\leq
\min\bigl\{\ell(\lambda)+\ell(\mu)-2,\;2n-\ell(\lambda)-\ell(\mu)\bigr\}
\leq n-1.
\]
The two extreme partitions $(n)$ and $(1^n)$ are at distance $n-1$, so the diameter is exactly $n-1$.
\end{proof}

For $P\in\mathsf M$, set
\[
n_P=d-\rank P,
\qquad
k_P(\mathsf M)=\abs{\mathsf M\setminus\set{P}},
\]
and define
\[
c_P(\mathsf M)=\max\set{k_P(\mathsf M)-1,0},
\qquad
s_P(\mathsf M)=n_P-k_P(\mathsf M).
\]
If $P\neq I$, then $c_P(\mathsf M)=\ell(\lambda_P(\mathsf M))-1$ is the number of elementary merges required to coarsen the complementary outcomes to $I-P$, while $s_P(\mathsf M)=n_P-\ell(\lambda_P(\mathsf M))$ is the number of elementary splits required to refine them all to rank one. In that case,
\[
c_P(\mathsf M)+s_P(\mathsf M)=n_P-1.
\]
For $P=I$, both quantities are zero and $\mathsf M=\{I\}$ is the only context.

Define the global complementary-unitary and refinement defects at $P$ by
\[
\operatorname{def}^{G}_P(w)
=\sup_{\substack{\mathsf M\in\MM_P\\U\in G_P}}
\abs{w(P,U\mathsf M U^*)-w(P,\mathsf M)}
\]
and
\[
\operatorname{def}^{\mathrm{ref}}_P(w)
=\sup_{\mathsf M'\succ_P\mathsf M}
\abs{w(P,\mathsf M')-w(P,\mathsf M)},
\]
where the supremum over an empty family is defined to be zero. Also write
\[
\operatorname{osc}_P(w)
=\sup_{\mathsf M,\mathsf N\in\MM_P}
\abs{w(P,\mathsf M)-w(P,\mathsf N)}.
\]

\begin{theorem}[Orbit--refinement stability theorem]\label{thm:bridge}
Let $w$ be a context-indexed probability assignment on $\MM(\HH)$. For every nonzero projection $P$ and every $\mathsf M,\mathsf N\in\MM_P$,
\begin{align}
\abs{w(P,\mathsf M)-w(P,\mathsf N)}
\leq\min\Bigl\{&
\bigl(c_P(\mathsf M)+c_P(\mathsf N)\bigr)
\operatorname{def}^{\mathrm{ref}}_P(w),\nonumber\\
&d_{n_P}\bigl(\lambda_P(\mathsf M),\lambda_P(\mathsf N)\bigr)
\operatorname{def}^{\mathrm{ref}}_P(w)
+\operatorname{def}^{G}_P(w)
\Bigr\}.
\label{eq:graphroute}
\end{align}
In particular,
\begin{align}
\abs{w(P,\mathsf M)-w(P,\mathsf N)}
\leq\min\Bigl\{&
\bigl(c_P(\mathsf M)+c_P(\mathsf N)\bigr)
\operatorname{def}^{\mathrm{ref}}_P(w),\nonumber\\
&\bigl(s_P(\mathsf M)+s_P(\mathsf N)\bigr)
\operatorname{def}^{\mathrm{ref}}_P(w)
+\operatorname{def}^{G}_P(w)
\Bigr\}.
\label{eq:tworoute}
\end{align}
Consequently, with $m_P=\max\set{n_P-1,0}$,
\begin{equation}
\operatorname{def}^{G}_P(w)
\leq 2m_P\operatorname{def}^{\mathrm{ref}}_P(w)
\label{eq:defectrelation}
\end{equation}
and
\begin{equation}
\operatorname{osc}_P(w)
\leq
m_P\operatorname{def}^{\mathrm{ref}}_P(w)
+\frac12\operatorname{def}^{G}_P(w)
\leq
2m_P\operatorname{def}^{\mathrm{ref}}_P(w).
\label{eq:uniformtworoute}
\end{equation}
\end{theorem}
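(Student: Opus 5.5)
The plan is to prove \eqref{eq:graphroute} by exhibiting two explicit chains of contexts in $\MM_P$ from $\mathsf M$ to $\mathsf N$. The other three inequalities then follow from it using Proposition~\ref{prop:graphdiameter}. The case $P=I$ is trivial: then $\mathsf M=\mathsf N=\{I\}$ and all quantities vanish. So assume $P\neq I$, so that $n_P\geq1$ and $\mathsf B_P=\{P,I-P\}$ is available.

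\emph{First route.} Coarsen $\mathsf M$ to $\mathsf B_P$ by merging two complementary outcomes at a time. This takes exactly $c_P(\mathsf M)$ merges. Each merge reverses an elementary complementary refinement relative to $P$, so each step changes $w(P,\cdot)$ by at most $\operatorname{def}^{\mathrm{ref}}_P(w)$. Do the same for $\mathsf N$ and apply the triangle inequality through $\mathsf B_P$. This gives the first entry of the minimum.

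\emph{Second route.} Take a shortest path $\lambda_P(\mathsf M)=\lambda^0,\lambda^1,\ldots,\lambda^t=\lambda_P(\mathsf N)$ in $\Gamma_{n_P}$, with $t=d_{n_P}(\lambda_P(\mathsf M),\lambda_P(\mathsf N))$. Lift it to a chain of actual contexts $\mathsf M=\mathsf M_0,\mathsf M_1,\ldots,\mathsf M_t$ with $\lambda_P(\mathsf M_i)=\lambda^i$ and each consecutive pair related by one elementary complementary refinement (in one direction or the other). The lifting step is the one point that needs care. If $\lambda^{i+1}$ merges two parts of $\lambda^i$, merge the two corresponding outcomes of $\mathsf M_i$, whose sum is again a projection of the right rank. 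If $\lambda^{i+1}$ splits a part $r$ into $a+b$, choose any outcome of $\mathsf M_i$ of rank $r$ and split its range orthogonally into subspaces of dimensions $a,b$. Every step leaves $P$ untouched, so each contributes at most $\operatorname{def}^{\mathrm{ref}}_P(w)$. Now $\lambda_P(\mathsf M_t)=\lambda_P(\mathsf N)$, so Theorem~\ref{thm:orbit} supplies $U\in G_P$ with $U\mathsf M_tU^*=\mathsf N$, costing at most $\operatorname{def}^G_P(w)$. The triangle inequality gives the second entry.

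\emph{Consequences.} For \eqref{eq:tworoute}, bound $d_{n_P}(\lambda,\mu)\leq d(\lambda,(1^{n_P}))+d((1^{n_P}),\mu)=s_P(\mathsf M)+s_P(\mathsf N)$ by Proposition~\ref{prop:graphdiameter}. For \eqref{eq:defectrelation}, apply the first route to the pair $(\mathsf M,U\mathsf MU^*)$; both have $c_P\leq n_P-1=m_P$, and taking suprema gives $\operatorname{def}^G_P\leq 2m_P\operatorname{def}^{\mathrm{ref}}_P$. For \eqref{eq:uniformtworoute}, average the two entries of \eqref{eq:tworoute}: the minimum is at most their mean. Since $c_P+s_P=n_P-1$ for each context, the mean equals $\frac{2(n_P-1)}{2}\operatorname{def}^{\mathrm{ref}}_P+\frac12\operatorname{def}^G_P=m_P\operatorname{def}^{\mathrm{ref}}_P+\frac12\operatorname{def}^G_P$. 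Take the supremum over $\mathsf M,\mathsf N$, then substitute \eqref{eq:defectrelation} to get the final bound $2m_P\operatorname{def}^{\mathrm{ref}}_P$.

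Two edge cases need checking. First, $n_P=1$: then $m_P=0$, every context is $\mathsf B_P$, and all bounds hold trivially. Second, for \eqref{eq:uniformtworoute} one needs $c_P(\mathsf M)=\ell(\lambda_P(\mathsf M))-1$, which holds since $k_P\geq1$ when $P\neq I$. The only real obstacle is the lifting of graph edges to measurement refinements, and it is routine given the existence of orthogonal subspace splittings.
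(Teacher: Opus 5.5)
Your proposal is correct and follows the paper's proof essentially step for step. Both arguments use the binary-coarsening route through $\mathsf B_P$, and the lifted shortest path in $\Gamma_{n_P}$ closed off by one complementary unitary from Theorem~\ref{thm:orbit}. They also share the detour through $(1^{n_P})$ for \eqref{eq:tworoute}, the choice $\mathsf N=U\mathsf MU^*$ in the first route for \eqref{eq:defectrelation}, and the ``minimum is at most the mean'' argument with $c_P+s_P=n_P-1$ for \eqref{eq:uniformtworoute}.
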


\begin{proof}
The case $P=I$ is immediate, so suppose $P\neq I$ and let $\mathsf B_P=\{P,I-P\}$ be the unique binary measurement containing $P$. Coarsen $\mathsf M$ to $\mathsf B_P$ by merging complementary outcomes. This is the reverse of a chain of $c_P(\mathsf M)$ elementary complementary refinements, so
\[
\abs{w(P,\mathsf M)-w(P,\mathsf B_P)}
\leq c_P(\mathsf M)\operatorname{def}^{\mathrm{ref}}_P(w).
\]
Doing the same for $\mathsf N$ and applying the triangle inequality gives the first route in~\eqref{eq:graphroute}.

For the second route, choose a shortest path
\[
\lambda_P(\mathsf M)=\lambda^{(0)},\lambda^{(1)},\ldots,\lambda^{(r)}=\lambda_P(\mathsf N)
\]
in $\Gamma_{n_P}$. Each edge can be realized on an actual measurement containing $P$: a split of a part is implemented by decomposing the corresponding outcome subspace into orthogonal subspaces of the prescribed positive dimensions, and a merge is the reverse operation. Starting from $\mathsf M$, follow this path to obtain a measurement $\widetilde{\mathsf N}$ with complementary rank profile $\lambda_P(\mathsf N)$. The triangle inequality gives
\[
\abs{w(P,\mathsf M)-w(P,\widetilde{\mathsf N})}
\leq r\operatorname{def}^{\mathrm{ref}}_P(w).
\]
By Theorem~\ref{thm:orbit}, there is a $U\in G_P$ such that $U\widetilde{\mathsf N}U^*=\mathsf N$, and therefore
\[
\abs{w(P,\widetilde{\mathsf N})-w(P,\mathsf N)}
\leq\operatorname{def}^{G}_P(w).
\]
This proves the second route in~\eqref{eq:graphroute}.

A path from each profile to $(1^{n_P})$ shows that
\[
d_{n_P}\bigl(\lambda_P(\mathsf M),\lambda_P(\mathsf N)\bigr)
\leq s_P(\mathsf M)+s_P(\mathsf N),
\]
which gives~\eqref{eq:tworoute}. To prove~\eqref{eq:defectrelation}, take $\mathsf N=U\mathsf M U^*$ in the first route. Then $c_P(\mathsf N)=c_P(\mathsf M)\leq m_P$; taking the supremum over $\mathsf M$ and $U$ yields the claim.

Finally, put $a=c_P(\mathsf M)+c_P(\mathsf N)$. Since
\[
s_P(\mathsf M)+s_P(\mathsf N)=2m_P-a,
\]
Equation~\eqref{eq:tworoute} gives
\[
\abs{w(P,\mathsf M)-w(P,\mathsf N)}
\leq
\min\bigl\{a\delta,(2m_P-a)\delta+\gamma\bigr\},
\]
where $\delta=\operatorname{def}^{\mathrm{ref}}_P(w)$ and $\gamma=\operatorname{def}^{G}_P(w)$. The minimum is no greater than the arithmetic mean of its two entries, namely $m_P\delta+\gamma/2$. Taking the supremum proves the first inequality in~\eqref{eq:uniformtworoute}; the second follows from~\eqref{eq:defectrelation}.
\end{proof}

Theorem~\ref{thm:refinement} shows that complementary-unitary invariance is not an additional exact requirement on the full measurement domain. The exact logical relation is
\[
\text{refinement consistency}
\quad\Longleftrightarrow\quad
\text{context independence}
\quad\Longrightarrow\quad
\text{complementary-unitary invariance},
\]
and the final implication is strict. Example~\ref{ex:uniform} is complementary-unitary invariant but not context independent and not refinement consistent. Theorem~\ref{thm:bridge} explains why complementary-unitary control remains useful at finite error: the orbit route may be shorter than coarsening through the binary context.

\section{The Gleason consequence}

For completeness, recall the finite-dimensional frame-function form of Gleason's theorem: if $d\geq3$ and $f$ assigns a number in $[0,1]$ to each rank-one projection such that
\[
\sum_{P\in\mathsf M}f(P)=1
\qquad\text{for every }\mathsf M\in\MM_1(\HH),
\]
then there is a unique density operator $\rho$ on $\HH$ such that $f(P)=\Tr(\rho P)$ for every rank-one $P$~\cite{Gleason1957,MorettiPastorello2013}. No countable-additivity hypothesis is needed in this finite-dimensional formulation.

\begin{theorem}[Invariant weights and the Born form]\label{thm:gleason}
Let $d\geq3$, and let $w$ be a context-indexed probability assignment on $\MM_1(\HH)$. The following are equivalent:
\begin{enumerate}
\item $w$ is complementary-unitary invariant;
\item $w$ is context independent;
\item there is a unique density operator $\rho$ on $\HH$ such that
\[
w(P,\mathsf M)=\Tr(\rho P)
\]
for every $\mathsf M\in\MM_1(\HH)$ and every $P\in\mathsf M$.
\end{enumerate}
The first condition may equivalently be replaced by invariance under two-level complementary unitaries.
\end{theorem}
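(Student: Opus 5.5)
The plan is to close the cycle (1)$\Rightarrow$(2)$\Rightarrow$(3)$\Rightarrow$(1), and then use Theorem~\ref{thm:twolevel} for the final sentence.

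For (1)$\Leftrightarrow$(2), I will simply cite Theorem~\ref{thm:profile} in its maximal-measurement form. Fix a rank-one $P$. By Corollary~\ref{cor:transitive}, $\MM_{1,P}$ is a single $G_P$-orbit, so $G_P$-invariance of $w(P,\cdot)$ on $\MM_{1,P}$ is the same as constancy of $w(P,\cdot)$ there. Equivalently, by Proposition~\ref{prop:defect}, $\operatorname{osc}^{\max}_P(w)=\operatorname{def}_{P,(1^{n})}(w)$, and one vanishes exactly when the other does.

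For (2)$\Rightarrow$(3), I will turn $w$ into a frame function. Every rank-one $P$ lies in some maximal measurement, obtained by completing a unit vector in $\ran P$ to an orthonormal basis. Define $f(P)=w(P,\mathsf M)$ for any $\mathsf M\in\MM_{1,P}$; context independence makes this well defined. Then $f$ takes values in $[0,1]$, and normalization of $w$ gives $\sum_{P\in\mathsf M}f(P)=1$ for every $\mathsf M\in\MM_1(\HH)$. Since $d\ge3$, Gleason's theorem supplies a density operator $\rho$ with $f(P)=\Tr(\rho P)$, hence $w(P,\mathsf M)=\Tr(\rho P)$.

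Uniqueness is the one point needing a short separate argument. Suppose $\Tr(\rho P)=\Tr(\sigma P)$ for every rank-one $P$. Then $\langle x,(\rho-\sigma)x\rangle=0$ for all unit vectors $x$, and by polarization over $\C$ this forces $\rho=\sigma$.

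For (3)$\Rightarrow$(1), observe that the Born form depends only on $P$, so $w(P,U\mathsf M U^*)=\Tr(\rho P)=w(P,\mathsf M)$ for every $U\in G_P$. This is invariance; indeed it directly gives (2) as well.

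For the final sentence, invariance under all of $G_P$ trivially implies invariance under $G_P^{(2)}$. Conversely, if $\operatorname{def}^{(2)}_P(w)=0$ for every rank-one $P$, then Theorem~\ref{thm:twolevel} gives $\operatorname{osc}^{\max}_P(w)=0$, which is context independence, i.e.\ condition (2).

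No step is a real obstacle. The only care needed is to check that the frame function is well defined and defined on every rank-one projection, and to note uniqueness of $\rho$.
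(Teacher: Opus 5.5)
Your proposal is correct and follows the paper's proof essentially step for step: (1)$\Leftrightarrow$(2) comes from the single-orbit structure of maximal contexts (Theorem~\ref{thm:profile}), (2)$\Rightarrow$(3) comes from building a frame function and applying Gleason's theorem, (3)$\Rightarrow$(1),(2) holds because the Born form depends only on $P$, and the two-level clause comes from Theorem~\ref{thm:twolevel}. The only difference is that you prove uniqueness of $\rho$ directly by polarization, whereas the paper takes it from the uniqueness clause of Gleason's theorem; both are fine.
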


\begin{proof}
The equivalence of (1) and (2) is the maximal-measurement case of Theorem~\ref{thm:profile}; Theorem~\ref{thm:twolevel} gives the stated local replacement. If (2) holds, define $f(P)$ to be the common value of $w(P,\mathsf M)$ over maximal measurements containing $P$. Normalization of $w$ implies that $f$ is a frame function of weight one. Gleason's theorem therefore gives a unique density operator $\rho$ satisfying $f(P)=\Tr(\rho P)$. This proves (3). Finally, an assignment of the form in (3) is normalized on every maximal projective measurement and depends only on $P$, so it satisfies both (1) and (2).
\end{proof}

\begin{theorem}[Born form on all projective measurements]\label{thm:allgleason}
Let $d\geq3$, and let $w$ be a normalized context-indexed probability assignment on $\MM(\HH)$. The following are equivalent:
\begin{enumerate}
\item $w$ is refinement consistent;
\item $w$ is context independent;
\item there is a unique density operator $\rho$ such that
\[
w(P,\mathsf M)=\Tr(\rho P)
\]
for every projective measurement $\mathsf M$ and every $P\in\mathsf M$.
\end{enumerate}
\end{theorem}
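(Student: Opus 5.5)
The plan is to close the cycle (1)$\Leftrightarrow$(2)$\Rightarrow$(3)$\Rightarrow$(2). The equivalence of (1) and (2) is exactly Theorem~\ref{thm:refinement}, so no new work is needed there. The implication (3)$\Rightarrow$(2) is immediate: $\Tr(\rho P)$ depends only on $P$, and it is normalized because $\sum_{P\in\mathsf M}\Tr(\rho P)=\Tr\rho=1$. Values lie in $[0,1]$ since $\rho\ge0$ and $0\le P\le I$.

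For (2)$\Rightarrow$(3), I would first restrict to rank-one projections. For each rank-one $P$, set $f(P)$ equal to the common value of $w(P,\mathsf M)$ over maximal $\mathsf M\ni P$. Context independence makes this well defined, and normalization of $w$ on $\MM_1(\HH)\subseteq\MM(\HH)$ makes $f$ a frame function of weight one. Gleason's theorem with $d\ge3$ then gives a unique $\rho$ with $w(P,\mathsf M)=\Tr(\rho P)$ for all rank-one $P$.

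The main step is to extend this to outcomes of arbitrary rank. I would use Lemma~\ref{lem:additivity}, which is available because (2) implies refinement consistency. Given $\mathsf M$ and an outcome $Q\in\mathsf M$, choose an orthonormal basis of $\ran Q$ and refine $Q$ into the rank-one projections $Q_1,\dots,Q_r$ by successive elementary refinements. Refine every other outcome similarly, producing a maximal measurement $\mathsf M'$. The lemma gives
\[
w(Q,\mathsf M)=\sum_i w(Q_i,\mathsf M')=\sum_i\Tr(\rho Q_i)=\Tr(\rho Q).
\]
Equivalently, one can use the binary context $\{Q,I-Q\}$ and coarsen directly. The case $Q=I$ is trivial, since $w(I,\{I\})=1=\Tr\rho$.

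Uniqueness of $\rho$ follows from the uniqueness in Gleason's theorem, because $\rho$ is already determined by its values on rank-one projections. The only point needing care is the finite-refinement clause of the lemma, which requires the coarsening chain to stay inside $\MM(\HH)$. This holds because each split produces nonzero orthogonal projections. I expect no genuine obstacle: the work is essentially bookkeeping around Gleason's theorem.
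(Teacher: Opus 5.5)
Your proposal is correct and follows essentially the same route as the paper: (1)$\Leftrightarrow$(2) comes from Theorem~\ref{thm:refinement}, (2)$\Rightarrow$(3) from Lemma~\ref{lem:additivity} combined with Gleason's theorem, and (3)$\Rightarrow$(2) by direct inspection. The difference is only bookkeeping. The paper collects the common values into a finitely additive measure on the whole projection lattice and applies Gleason there. You instead apply the rank-one frame-function form, which is the version the paper actually quotes, and then extend to higher-rank outcomes explicitly by additivity, which matches the cited theorem slightly more closely.
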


\begin{proof}
The equivalence of (1) and (2) is Theorem~\ref{thm:refinement}. Under either condition, let $\mu(P)$ be the common value assigned to the nonzero projection $P$, and set $\mu(0)=0$. Lemma~\ref{lem:additivity}, together with normalization, makes $\mu$ a finitely additive probability measure on the projection lattice. Gleason's theorem gives a unique density operator $\rho$ with $\mu(P)=\Tr(\rho P)$ for every projection $P$. This proves (3). Conversely, the Born assignment is normalized, context independent, and unchanged when an outcome other than $P$ is refined.
\end{proof}

\begin{remark}[Dimension two]\label{rem:d2}
When $d=2$, each rank-one $P$ lies in the unique maximal measurement $\set{P,\id-P}$. On maximal measurements, complementary-unitary invariance and context independence are therefore automatic. On all projective measurements, refinement consistency is also vacuous: the only elementary refinement is the split of $\set{I}$ into $\set{P,I-P}$, which leaves no outcome unaffected. Context independence is automatic for the same reason that each nontrivial outcome occurs in only one measurement. Any function $f$ on rank-one projections satisfying
\[
f(\id-P)=1-f(P)
\]
defines a normalized assignment on maximal measurements, and most such functions are not of the form $\Tr(\rho P)$. Thus the orbit and refinement theorems remain valid in dimension two, while the Born-form conclusions begin in dimension three. Gleason-type results for positive-operator-valued measurements can recover the qubit case under a larger measurement domain~\cite{Busch2003,CavesEtAl2004}.
\end{remark}

\section{An explicit symmetry-breaking assignment}

The preceding results are exact, but the broken symmetry can also be seen directly in a smooth normalized example. Let $\sigma$ be a density operator and let $\alpha>0$. On maximal measurements define
\begin{equation}\label{eq:power}
w_{\sigma,\alpha}(P,\mathsf M)
=\frac{\bigl(\Tr(\sigma P)\bigr)^\alpha}
{\displaystyle\sum_{Q\in\mathsf M}\bigl(\Tr(\sigma Q)\bigr)^\alpha}.
\end{equation}
The denominator is positive, so this is a well-defined normalized assignment. At $\alpha=1$, the denominator is $\Tr\sigma=1$ and~\eqref{eq:power} is the Born assignment. For other exponents, normalization generally introduces dependence on the other outcomes in the measurement.

Take $\HH=\C^3$ with orthonormal basis $e_1,e_2,e_3$, and put
\[
\psi=\frac{e_1+e_2}{\sqrt2},\qquad
\sigma=|\psi\rangle\langle\psi|,
\qquad P_1=|e_1\rangle\langle e_1|.
\]
For the standard maximal measurement
\[
\mathsf M_0=\set{P_1,P_2,P_3},
\]
the Born weights are $(1/2,1/2,0)$. With $\alpha=2$, equation~\eqref{eq:power} gives
\[
w_{\sigma,2}(P_1,\mathsf M_0)
=\frac{(1/2)^2}{(1/2)^2+(1/2)^2}=\frac12.
\]
Now set
\[
f_2=\frac{e_2+e_3}{\sqrt2},\qquad
f_3=\frac{e_2-e_3}{\sqrt2},
\]
and let
\[
\mathsf M_1=\set{P_1,|f_2\rangle\langle f_2|,|f_3\rangle\langle f_3|}.
\]
There is a unitary $U\in G_{P_1}$ with $Ue_2=f_2$ and $Ue_3=f_3$, so $\mathsf M_1=U\mathsf M_0U^*$. The Born weights in $\mathsf M_1$ are $(1/2,1/4,1/4)$, and therefore
\[
w_{\sigma,2}(P_1,\mathsf M_1)
=\frac{(1/2)^2}{(1/2)^2+(1/4)^2+(1/4)^2}=\frac23.
\]
Hence
\[
\abs{w_{\sigma,2}(P_1,\mathsf M_1)-w_{\sigma,2}(P_1,\mathsf M_0)}=\frac16.
\]
The rule is normalized and continuous in the measurement, yet it changes when only the orthogonal decomposition of $P_1^\perp$ is rotated. By Proposition~\ref{prop:defect}, $1/6$ is a lower bound for both its contextual oscillation and its complementary-unitary defect on that orbit.

\section{Scope and interpretation}

The mathematical conclusions may be summarized as follows. First, the complementary rank profile is a complete invariant for the action of $G_P$ on projective measurements containing $P$, and each orbit is an explicit compact homogeneous space. Second, on maximal rank-one measurements the profile is fixed, so a probability assignment is independent of the surrounding measurement exactly when it is invariant under complementary unitaries; two-level complementary unitaries already generate the required test. Third, on all projective measurements, refinement consistency alone is equivalent to context independence because every context through $P$ is linked to the unique binary context $\{P,I-P\}$. Fourth, the finite orbit quotient is the rank-profile refinement graph $\Gamma_{n_P}$, and Theorem~\ref{thm:bridge} bounds context variation using either binary coarsening or a shortest graph path followed by one unitary orbit move. Finally, in dimension at least three, Gleason's theorem converts the corresponding exact conditions into the Born form.

The results do not infer context independence from a separate dynamical, operational, or decision-theoretic premise. On maximal measurements, one exact condition is expressed as a group invariance. On the full projective-measurement domain, refinement consistency is already strong enough to remove both profile and orientation dependence: one may coarsen all the way to $I-P$ and refine again in a different orientation. Complementary-unitary control remains useful quantitatively because it may provide a shorter path when the refinement defect is nonzero.

The complementary-unitary invariance used here should also be distinguished from environment-assisted invariance~\cite{Zurek2003}, in which a unitary acting on a system is compensated by a unitary acting on an entangled environment. The present condition involves a single system, no entanglement, and unitaries acting only on the orthogonal complement of a fixed outcome within one projective measurement.

Nor should the terminology be conflated with every use of \emph{contextuality} in quantum foundations. Kochen--Specker contextuality concerns the impossibility of certain deterministic, context-independent value assignments in dimension at least three~\cite{KochenSpecker1967}; generalized contextuality concerns ontological representations of operational equivalences~\cite{Spekkens2005}. The present analysis concerns only the narrower consistency condition for scalar probabilities assigned to repeated projection outcomes. Reviews of the broader landscape include~\cite{BudroniEtAl2022}.

For positive-operator-valued measurements, the orbit structure is richer: ranks no longer classify decompositions, and additional operator data enter. Extending the orbit and refinement analysis to that setting is a separate problem.

\begingroup
\footnotesize
\setlength{\itemsep}{0em}

\endgroup

\end{document}